\documentclass[11pt]{article}

\usepackage[T1]{fontenc}              
\usepackage[latin9]{inputenc}         

\usepackage[a4paper]{geometry}        

\usepackage{url}
\usepackage{hyperref}

\usepackage{booktabs}

\usepackage{graphicx} 
\usepackage{amssymb,amsfonts,amsmath}
\usepackage{amsthm}

\usepackage{natbib}
\usepackage{mathpazo}
\usepackage{microtype}

\usepackage{color}

\usepackage{todonotes}
 
\providecommand\bt{\boldsymbol t}
\providecommand\bq{\boldsymbol q}
\providecommand\bp{\boldsymbol p}

\newcommand\bmu{\boldsymbol \mu}

\newcommand\bSigma{\boldsymbol \varSigma}

\providecommand\bfeta{\boldsymbol \eta}

\providecommand\bindist{\operatorname{Bin}}
\providecommand\normdist{N}
\providecommand\gamdist{\operatorname{Gam}}
\providecommand\poisdist{\operatorname{Pois}}
\providecommand\nakdist{\operatorname{Nak}}
\providecommand\berdist{\operatorname{Ber}}
\providecommand\expdist{\operatorname{Exp}}
\providecommand\catdist{\operatorname{Cat}}
\providecommand\sqrtgamdist{\operatorname{SRGam}}

\providecommand\halfnormdist{\operatorname{HN}}
\providecommand\rayleighdist{\operatorname{Rayl}}
\providecommand\maxboltzdist{\operatorname{MB}}

\providecommand\prob{\operatorname{Pr}}

\newcommand\expect{\operatorname{E}}

\newcommand\var{\operatorname{Var}}

\newcommand\ikl{D_{\text{KL}}}

\newtheorem{prop}{Proposition}

\newcommand{\tabcite}[1]{Table~{\bf\ref{#1}}}
\newcommand{\seccite}[1]{Section~{\bf\ref{#1}}}

\definecolor{revision}{rgb}{1,0,0} 

\begin{document}

\date{24 April 2026}

\title{
From Physics to Statistics:
A Simple Route to Exponential Families via Maximum Entropy
}

\newcommand\myaddress[2][]{\relax
    {\noindent{\small$^{#1}$#2}}}

\author{Korbinian Strimmer$^{1}$ 
\thanks{To whom correspondence should be addressed. Email: {\tt korbinian.strimmer@manchester.ac.uk  } 
}
}

\maketitle

\myaddress[1]{Department of Mathematics, University of Manchester, 
Alan Turing Building, Oxford Road, Manchester M13 9PL, UK.}

\newpage

\begin{abstract} 

Exponential families form the backbone of modern statistics and machine learning, but textbooks seldom derive them from first principles in an accessible way. Although minimal sufficiency and the principle of maximum entropy, originating in physics, provide core motivation, they are often presented as technical and requiring advanced prerequisites.

Here, a short, self-contained derivation of exponential families based on maximum entropy is presented that is straightforward to carry out, requires only a modest background in information entropy, and avoids technicalities like constrained optimisation.  Two propositions are demonstrated in this fashion:
i)~exponential families with a general base maximise information entropy with respect to that base subject to fixed expectations of canonical statistics, and
ii)~exponential families with a uniform base maximise standard information entropy under the same constraints.

Maximum entropy therefore provides a principled foundation for exponential families with minimal prerequisites, highlighting the value of teaching entropy in statistics courses.

\vspace{1cm}

\underline{Keywords}:  Exponential family, information entropy, information entropy with prior measure, maximum entropy, minimal sufficiency, statistical mechanics.

\end{abstract}

\newpage

\section{Introduction}

Exponential families are a widely used class of probability models underpinning much of modern data analysis and machine learning \citep{EfronExpFam2022,Murphy2023}. They form the basis of generalised linear and other regression models and also underlie probabilistic graphical models like Bayesian networks, Ising models, and Gaussian Markov random fields \citep{WainwrightJordan2008}. Exponential families offer many practical and theoretical benefits. For example, they permit effective algorithms for statistical learning, they enable data reduction while preserving information about parameters, and results proven under specific assumptions, such as normality, can often be generalised, at least approximately, to exponential families.

\subsection{Principles underlying exponential families}

The first concepts related to exponential families emerged between 1870 and 1900 in physics during the development of statistical mechanics.
In particular, Boltzmann and Gibbs introduced three key innovations in statistical thermodynamics that also proved to be fundamental to information theory and statistics:
\renewcommand{\theenumi}{\roman{enumi}}
\begin{enumerate}
\item  Distribution-based definitions of entropy, with both uniform and general prior measures, corresponding to standard information entropy and negative Kullback-Leibler (KL) divergence, respectively \citep{Akaike1985},  
\item the general principle of entropy maximisation, to find equilibrium states and least-informative distributions \citep{LevineTribus1979}, and
\item exponential families, with both uniform and general base, derived as maximum entropy distributions subject to expectation constraints \citep{Jaynes1957a,Jaynes1968}.
\end{enumerate}

In later formal developments of exponential families in statistics, with the canonical form arising from exponential tilting \citep{Esscher1932}, they were motivated by their data-summarisation property as the unique parametric distributions  allowing fixed-dimensional minimal sufficient statistics \citep{Fisher1934}. This was demonstrated by the Darmois-Kopman-Pitman (DKP) theorems \citep{Darmois1935,Koopman1936,Pitman1936}, see also \citet{BarankinMaitra1963}.
The concept of minimal sufficiency was also anticipated in statistical mechanics \citep{Mandelbrot1962}. 

Therefore, minimal sufficiency and maximum entropy provide two distinct foundational principles for characterising exponential families.

\subsection{Teaching exponential families}

Most modern statistics and machine-learning texts introduce exponential families by stating their distributional form, but few explain the underlying principles or use them to derive the functional form.  A few notable exceptions are listed in \tabcite{tab-textbooks}.
 
Only \citet{Murphy2023}  among the works in \tabcite{tab-textbooks}  demonstrates how to use maximum entropy to derive exponential families with a general base measure. Except for \citet{Sundberg2019} contemporary textbooks 
mostly mention DKP theorems and minimal sufficiency without actively using them to derive exponential families.
In machine learning literature, entropy maximisation and its connection with exponential families are recognised 
but typically limited to exponential families with a uniform base.

The absence of exponential-family derivations in statistics textbooks likely reflects several factors, notably the perceived difficulty of those derivations and the presumed need for advanced prerequisites.  For example, DKP-type theorems can be technically demanding, whereas maximum-entropy arguments assume familiarity with information-theoretic concepts. Furthermore, maximum-entropy methods often rely on constrained optimisation using Lagrange multipliers and functional calculus, topics typically taught in physics or applied mathematics but not core statistics.

\subsection{A simple maximum-entropy route to exponential families} 

In this note a simple maximum-entropy derivation of exponential families is presented that avoids explicit optimisation, uses only elementary calculations and assumes only a basic background in information entropy. 
This approach covers general exponential families as well as those with a uniform base.

In the following, \seccite{sec-expfam} revisits exponential families and
\seccite{sec-entropy} introduces standard information entropy
and information entropy with respect to a prior measure.
Subsequently, in \seccite{sec-efmaxent} both the general exponential family  and as well as the subset of exponential families with uniform base are shown to naturally arise from the principle of maximum entropy.

\begin{table}[t]
\caption{Derivation of exponential families in contemporary textbooks.}
{\centering
\makebox[\textwidth]{
\begin{tabular}{l r ll }
\toprule
Reference         & Field & via DKP theorems & via maximum entropy \\
\midrule
\citet{Murphy2023} & ML  & Sections 2.4.5 and 5.3.4 (mention) & Section 2.4.7 (EF) \\
\citet{EfronExpFam2022} & ST & Section 1.3 (mention)  & --- \\
\citet{McElreath2020}   & ST & --- & Section 10.2.1 (mention) \\
\citet{Sundberg2019}  & ST & Proposition 3.4 & --- \\
\citet{Little2019} & ML  & Section 4.5 (mention) & Section 4.5 (EFU) \\
\citet{Golan2018} &IT & --- & Chapter 4 (EFU)\\
\citet{Murphy2012} & ML  & Section 9.2.4 (mention) & Section 9.2.6 (EFU) \\
\cite{CoverThomas2006} & IT & --- & Theorem 12.1.1 (EFU) \\
\bottomrule
\end{tabular}}}
\\{\it Abbreviations:} ST statistics, ML machine learning, IT information theory, EF exponential family with general base, EFU exponential family with uniform base.
\label{tab-textbooks}\\
\end{table}

\section{Exponential families in a nutshell}
\label{sec-expfam}

Many commonly used distributions in statistics are exponential families (EF).  This section provides
a summary of some of their key properties. For modern, comprehensive treatments see, for example, \citet{Sundberg2019} and \citet{EfronExpFam2022}.
\tabcite{tab-expfam} shows some examples of univariate exponential families.

\subsection{Distributional form}

\begin{table}[t!]
\caption{Examples of univariate exponential families.}
{\centering
\makebox[\textwidth]{
\begin{tabular}{l l l l l }
\toprule
Distribution  & $h(x)$ & $z(\bfeta)$ & $\bfeta$ & $\bt(x)$   \\
  & $\text{supp}(h)$ & & & $\bmu_{\bt}=\expect(\bt(x))$  \\

\midrule
$\bindist(n, \theta)$ & $\binom{n}{x}$ & $(1+\exp \eta)^n$ & $\log\left(\frac{\theta}{1-\theta}\right)$ & $x$  $\phantom{\begin{pmatrix} x \\ x^2\end{pmatrix}}$  \\
& $\{0, 1, \dots, n\}$ & $\phantom{l}=(1-\theta)^{-n}$ &  &  $n \theta$  \\

\midrule
$\poisdist(\lambda)$  & $\frac{1}{x!}$ & $\exp(\exp \eta) $ & $\log \lambda$ & $x$ $\phantom{\begin{pmatrix} x \\ x^2\end{pmatrix}}$   \\
  & $\{0, 1, \ldots\}$  & $\phantom{l}= \exp \lambda $ & &$\lambda$ \\

\midrule
$\normdist(\mu,\sigma^2)$  & $1$ & $(-\frac{\pi}{\eta_2})^{1/2}$ $\exp\left(-\frac{\eta_1^2 }{4 \eta_2}\right)$ & $\begin{pmatrix}  \frac{\mu}{\sigma^{2}} \\ -\frac{1}{2\sigma^{2}}\end{pmatrix}$ & $\begin{pmatrix} x \\ x^2\end{pmatrix}$ \\
 & $\mathbb{R}$ & $\phantom{l}= (2 \pi \sigma^2 )^{1/2}$ $\exp\left( \frac{\mu^2}{2 \sigma^2} \right)$ &  & $\begin{pmatrix} \mu \\ \sigma^2 + \mu^2 \end{pmatrix}$   \\

\midrule
$\gamdist(\alpha, \theta)$  & $1$&$(-\eta_1)^{-\eta_2-1}$ $\Gamma(\eta_2+1)$  & $\begin{pmatrix} -\frac{1}{\theta} \\ \alpha-1 \end{pmatrix}$ & $\begin{pmatrix}  x \\ \log x \end{pmatrix}$  \\
 & $\mathbb{R}^{+}$ &$\phantom{l}=\theta^{\alpha} \, \Gamma(\alpha)$ &  &$\begin{pmatrix} \alpha \theta \\ \psi^{(0)}(\alpha) +\log\theta\end{pmatrix}$   \\

\midrule
$\sqrtgamdist(\alpha, \theta)$  & $1$&$ \frac{1}{2} (-\eta_1)^{-(\eta_2+1)/2} \Gamma\left(  \frac{\eta_2+1}{2}   \right)$  & $\begin{pmatrix} -\frac{1}{\theta} \\ 2\alpha -1 \end{pmatrix}$    & $\begin{pmatrix}  x^2 \\ \log x \end{pmatrix}$  \\
  & $\mathbb{R}^{+}$ &$\phantom{l} = \frac{1}{2} \theta^{\alpha}\, \Gamma(\alpha)$  & & $\begin{pmatrix} \alpha \theta \\\frac{1}{2} (\psi^{(0)}(\alpha) +\log \theta) \end{pmatrix}$\\
\bottomrule
\end{tabular}}
}
{
\\ {\it Distributions:}
Binomial $\bindist(n, \theta)$,
Poisson  $\poisdist(\lambda)$,
normal  $\normdist(\mu,\sigma^2)$,
gamma  $\gamdist(\alpha, \theta)$,
square-root gamma  $\sqrtgamdist(\alpha, \theta)$.
{\it  Related distributions:}
Bernoulli  $\berdist(\theta) = \bindist(1, \theta)$,
scaled chi-squared  $s^2 \,\chi^2_k = \gamdist(k/2, 2 s^2)$,
exponential  $\expdist(\theta) = \gamdist(1, \theta)$,
Nakagami  $\nakdist(\alpha, \Omega) = \sqrtgamdist(\alpha, \Omega/\alpha)$,
scaled chi  $s \,\chi_k = \sqrtgamdist(k/2, 2 s^2)$,
half-normal  $\halfnormdist(s) = \sqrtgamdist(1/2, 2 s^2)$,
Rayleigh  $\rayleighdist(s) = \sqrtgamdist(1, 2 s^2)$,
Maxwell-Boltzmann  $\maxboltzdist(s)= \sqrtgamdist(3/2, 2 s^2)$.
{\it  Special function:}
$\psi^{(0)}(x) =\frac{d}{dx} \log \Gamma(x)$ denotes the digamma function.
}
\label{tab-expfam}\\
\end{table}

A family of distributions $P(\bfeta)$ for a random variable $x$ is an exponential family if it is generated by  exponential tilting of a base distribution $B$, producing a probability density 
or mass function (pdmf) of the form
\begin{equation}
\begin{split}
p(x|\bfeta) &=   h(x)\, e^{\langle \bfeta, \bt(x) \rangle}/ z(\bfeta)\\
       & =  h(x)\, e^{\langle \bfeta, \bt(x) \rangle -a(\bfeta)}\\
\end{split}
\end{equation}
where 
\begin{itemize}
\item $\bt(x)$ are the canonical statistics, of fixed and typically low dimension,
\item $\bfeta$ are the corresponding canonical parameters,
\item $h(x)$ is a positive base function, usually unnormalised, also encoding the support of the pdmf, 
\item $e^{\langle \bfeta, \bt(x) \rangle}$ is the exponential tilt,
\item $z(\bfeta)$ is the partition function (the normaliser) and
$a(\bfeta) = \log z(\bfeta)$ the corresponding log-partition function (the log-normaliser).
\end{itemize}
The inner product notation $\langle \cdot,\cdot \rangle$ in the  tilting factor allows 
canonical statistics and parameters to be scalars, vectors or matrices.

If the elements in $\bt(x)$ are affinely independent the representation of
the exponential family is minimal, otherwise the representation is nonminimal.
Natural exponential families \citep{MorrisLock2009} are obtained by tilting by
identity with $t(x) = x$.

The base distribution $P$ has normalised pdmf $b(x) = p(x | \bfeta=0)  =   h(x) / z(0)$.
Note that the same exponential family can be constructed using any of its members as
base distribution.
Exponential families that can be written with a constant base
function $h(x)=1$  over the support of $x$, resulting in the simplified
pdmf $p(x|\bfeta) =   e^{\langle \bfeta, \bt(x) \rangle}/ z(\bfeta)$,
are exponential families with a uniform base (EFU).

In physics, exponential families are familiar under the term canonical distribution or Boltzmann distribution\footnote{Not to be confused with the Maxwell-Boltzmann distribution, a particular instance of
the square-root gamma distribution (cf. \tabcite{tab-expfam}).}, with
the exponential tilt known as Boltzmann factor and the base function $h(x)$
as degeneracy factor \citep{Reif1965,Mandl1988}.

\subsection{Partition function}

The partition function  
\begin{equation}
z(\bfeta) = \int_x \, e^{ \langle \bfeta, \bt(x) \rangle}\, h(x) \, dx
\end{equation}
ensures that
 $p(x|\bfeta)$ is normalised (for discrete $x$ replace the integral by a sum).

The set of
values of $\bfeta$ for which $z(\bfeta)  < \infty$, and
hence for which $p(x|\bfeta)$
is well defined, comprises the parameter space of the
exponential family. Some choices of $h(x)$ and $\bt(x)$ do not yield a finite normalising factor for any $\bfeta$ and hence these cannot be used to form an exponential family.  

The log-partition function $a(\bfeta)= \log z(\bfeta)$ allows to compute the cumulants of the canonical statistics $\bt(x)$ expressed in terms of the canonical statistics $\bfeta$.
In particular, its gradient
yields the mean
\begin{equation}
\expect( \bt(x) )  = \bmu_{\bt}(\bfeta)  = \nabla a(\bfeta)
\end{equation}
and the Hessian matrix the variance
\begin{equation}
\var( \bt(x) )  = \bSigma_{\bt}(\bfeta)  = \nabla \nabla^T a(\bfeta)\,.
\end{equation}

The mean $\bmu_{\bt}$ of the canonical
statistics $\bt(x)$ is often used as alternative parametrisation instead of
the canonical parameters $\bfeta$. Indeed, in many cases some of the expectation parameters $\bmu_{\bt}$  correspond to conventional parameters, such as the mean $\expect(x)$ for
$t(x)=x$. For an exponential family with minimal representation there is a one-to-one map between the canonical parameters $\bfeta$ and the expectation parameters\  $\bmu_{\bt}$. 

The term partition function and the technique of differentiating its logarithm
also originated in statistical mechanics \citep{Reif1965,Mandl1988}.

\subsection{Exponential families and change of variables}

Assume that a random variable $x$ follows an exponential-family distribution
and that $y(x)$ is an invertible transformation.
The resulting pdmf for the transformed variable $y$ is given by
\begin{equation}
p(y|\bfeta) = h_y(y)\, e^{ \langle \bfeta, \bt_y(y) \rangle } / z(\bfeta)
\end{equation}
with transformed canonical statistics
\begin{equation}
\bt_y(y) = \bt_x(x(y))\,.
\end{equation}
For a discrete random variable the base function changes to
\begin{equation}
h_y(y) = h_x(x(y))
\end{equation}
whereas for a continuous random variable the base function becomes
\begin{equation}
h_y(y) = \left| D x(y) \right|\,h_x(x(y))
\end{equation}
where $D x(y)$ is the Jacobian matrix of the inverse transformation  $x(y)$.

Thus, for both discrete and continuous random variables the exponential family is closed under
under a change of variables, and the exponential-family form is preserved.   
However, since a change of variables transforms the base function, 
 exponential families with constant base (EFU) will normally be transformed to general exponential family (EF)
form.

In some cases, depending on the canonical statistics, the  Jacobian factor in the modified base function can be integrated back into the tilting factor, so that an EFU remains an EFU even after a change of variables.
An example is the relationship between the gamma distribution  and the square-root gamma distribution (\tabcite{tab-expfam}).
If $x \sim \gamdist(\alpha, \theta)$  then $\sqrt{x} \sim \sqrtgamdist(\alpha, \theta)$. Both exponential families can be written with a constant base function because the log-Jacobian factor
arising from the square-root transformation is linear in the canonical statistics.

\section{Information entropy}
\label{sec-entropy}

Entropy, a fundamental quantity originating in physics, plays a central
role in information theory and statistical learning. This section
introduces information entropy through log-loss and combinatorial arguments, and
presents entropy with both uniform and general prior measures.

\subsection{Standard information entropy}

The information entropy $H(Q)$ of the distribution $Q$ with probability
or density function $q(x)$ is given by
\begin{equation}
H(Q)= - \expect_Q\left(\log q(x)\right) \,. 
\end{equation}
Information entropy is also called Shannon entropy (for discrete
distributions) and differential entropy (for continuous distribution).
In physics it is commonly known as Boltzmann-Gibbs entropy.

Information entropy is closely linked to the log-scoring rule or log-loss
\begin{equation}
S(x, P) = - \log p(x)
\end{equation}
that assesses the model $P$ by assigning a numerical score based on $P$
and an observed outcome $x$. The mean log-loss 
\begin{equation}
H(Q,P) = \expect_{Q}(S(x, P)) =  - \expect_{Q} (\log p(x) )
\end{equation}
denotes the risk of $P$ under $Q$. It is uniquely minimised
for $P=Q$, with the minimum risk equal to the entropy $H(Q)$,
so that
\begin{equation}
H(Q, P) \geq H(Q, Q) = H(Q) \,,
\end{equation}
which is known as properness inequality or Gibbs inequality.
The link between proper scoring rules and entropy is discussed
in \citet{Dawid2007}.

Information entropy is only defined up to an affine transformation with
a positive scaling factor as the underlying loss is itself also only
defined up to an equivalence class
\begin{equation}
S^{\text{equiv}}(x, Q) = k S(x, Q) + c(x)
\end{equation}
so that
\begin{equation}
H(Q)^{\text{equiv}}= k H(Q) + c\,.
\end{equation}
The scaling factor $k > 0$ determines the units of entropy as well as
the base of the logarithm. Physical entropy uses the Boltzmann constant 
$k_B=1.380649 \times 10^{-23} J/K$ as scaling factor, giving physical
entropy units of energy/temperature, whereas information entropy selects
$k$ to set the units of entropy to bits (binary logarithm) or nats
(natural logarithm). The additive constant $c = \expect_Q(c(x))$ 
can be chosen to fix a reference point, such as setting the maximum
or minimum entropy to zero.

\subsection{Information entropy with prior measure}

Information entropy with prior measure of the distribution $Q$ with reference
to the distribution $P$ is given by 
\begin{equation}
\begin{split}
G(Q, P) &=  H(Q) - H(Q, P) \\
 &= -\expect_Q\log\left(\frac{q(x)}{p(x)}\right)\,.
\end{split}
\end{equation}
By construction, $G(Q, P)$ is nonpositive with a maximum at zero. 

Standard information entropy $H(Q)$  is a special case of  information entropy with  prior measure
$G(Q, P)$ when the reference distribution $P$ is uniform. In this case the mean log-loss
$H(Q, P)$ is a constant that depends only on $P$ but not on $Q$.

In physics, $G(Q,P)$  is referred to  
as relative Boltzmann-Gibbs entropy \citep[Eq.~14 in][]{PachterYangDill2024}
or generalised Boltzmann-Gibbs-Shannon entropy \citep[p.~226 in][]{Wehrl1978}.
Jaynes, who played a vital role in bridging physics and statistics, introduced
information entropy with prior measure $G(Q,P)$ and the principle of maximum entropy
to statistics \citep{Jaynes1968,Rosenkrantz1983}.

Note that $G(Q, P)$ equals the negative of the
Kullback-Leibler (KL) divergence \citep{KullbackLeibler1951}
\begin{equation}
\ikl(Q, P) = H(Q, P) - H(Q) = - G(Q, P)\,,
\end{equation}
which in turn is the canonical divergence induced by the log-loss \citep{Dawid2007}.

The equivalence class for entropy with prior measure only depends on the scale
factor 
\begin{equation}
G(Q, P)^{\text{equiv}}=  k G(Q, P)
\end{equation}
as the additive constant $c$ cancels out.

\subsection{Combinatorial derivation}

One of Boltzmann's central insights for statistical mechanics was the distinction
between the macrostates of a system and their underlying microstates, with entropy as
a logarithmic measure of the  size or volume of a macrostate.
Taking this volume to be proportional to the number of  microstates compatible with the observed macrostate  yields standard information entropy.  A natural generalisation is to measure the volume of a macrostate 
by its probability, which leads to information entropy with prior measure.

This can best illustrated via the multinomial model.
Let $D=\{n_1, \ldots, n_K\}$ be counts for $K$ classes with $n = \sum_{k=1}^K n_k$
and  $\widehat{Q} = \catdist(\hat{\bq})$ be the corresponding empirical categorical
distribution with class frequencies $\hat{q}_k = n_k/n$.
A macrostate corresponds to the counts $D$, or equivalently to the distribution 
$\widehat{Q}$, and each particular allocation of the $n$ elements to the $K$ groups constitutes a microstate.
The number of allocations of $n$ distinct items to $K$ groups compatible with the given counts
$D$ is given by the multinomial coefficient 
\begin{equation}
W_K = \binom{n}{n_1, \ldots, n_K} 
    = \frac {n!}{n_1! \, n_2! \, \ldots \, n_K! } \,,
\end{equation}
so $W_K$ is the multiplicity of the macrostate~$D$. 
The probability to observe $D$ assuming prior class probabilities  $\bp=(p_1, \ldots, p_K)^T$
is 
\begin{equation}
\prob(D| \,\bp) =  W_K \times \prod_{k=1}^K  p_k^{n_k}  \,.  
\end{equation}

Taking the logarithm of the multiplicity $W_K$
directly yields, with identity for large $n$, standard information entropy
\begin{equation}
\frac{1}{n} \log W_K \approx  H(\widehat{Q}) = -\sum_{k=1}^K \hat{q}_k  \log \hat{q}_k\,.
\end{equation}
Likewise, noting that
\begin{equation}
\frac{1}{n}\log\left(\prod_{k=1}^K  p_k^{n_k}\right) = \sum_{k=1}^K\hat{q}_k  \log p_k = -H(\widehat{Q}, P)
\end{equation}
with $P = \catdist(\bp)$,
taking the logarithm of the probability $\prob(D| \,\bp)$
yields  information entropy 
\begin{equation}
\frac{1}{n} \log  \prob(D| \,\bp)  \approx G(\widehat{Q}, P)= -\sum_{k=1}^K \hat{q}_k  \log \left( \frac{\hat{q}_k}{p_k}\right)
\end{equation}
with prior measure $P$.

The combinatorial derivations of information entropy $H(Q)$ and  information entropy with prior measure $G(Q,P)$ are due to Boltzmann. \citet{Akaike1985} provides a historical account of this innovation from a statistical perspective.

In addition to introducing the notion of information entropy as log-probability, this pathway to entropy also enables further insights.
For example, standard information entropy $H(\widehat{Q})$  implicitly assumes that every possible microstate, regardless of its associated macrostate, has the same probability (equal to $1/K^n$) and that the probabilities of each class are equal ($p_k= 1/K$).
In contrast, information entropy $G(\widehat{Q}, P)$ with prior measure $P$ only assumes that microstates belonging to the same macrostate have identical probability (equal to $\prod_{k=1}^K p_k^{n_k}$) and also allows for unequal class probabilities~$p_k$.

\subsection{Concavity of entropy}

A key property of information entropy $H(Q)$ is that it is strictly concave in~$Q$. This means that
\begin{equation}
H( Q_{\lambda} ) > (1-\lambda) H(Q_0) + \lambda H(Q_1)
\end{equation}
for the mixture $Q_{\lambda} = (1-\lambda) Q_0 + \lambda Q_1$
with $0 < \lambda < 1$ and $Q_0 \neq Q_1$. This  follows from
the strict concavity of the function $-x \log(x)$.

Likewise, information entropy with prior measure $G(Q, P)$ is also strictly concave in~$Q$.
Therefore,
\begin{equation}
G( Q_{\lambda}, P ) > (1-\lambda) G(Q_0, P) + \lambda G(Q_1, P)
\end{equation}
The convexity of $G(Q,P)$ in $Q$ derives from the concavity of $H(Q)$
and that the risk $H(Q, P)$ is mixture-preserving in\ $Q$ with
\begin{equation}
H( Q_{\lambda}, P ) = (1-\lambda) H(Q_0, P) + \lambda H(Q_1, P)
\end{equation}
due to the linearity of expectation.

Strict concavity of entropy implies that mixing two states, represented by
$Q_0$ and $Q_1$, increases entropy. 
In physics, the concavity of entropy is the fundamental principle
underlying the second law of thermodynamics.

Furthermore, when maximising entropy $H(Q)$ or $G(Q, P)$ with regard to $Q$, with or without constraints, 
strict concavity ensures that the maximum is unique (if a maximum exists).
In physics, this ensures the existence of stable equilibria. 

\subsection{Entropy under change of variables}

A further important property of entropy is its behaviour 
when changing variables, say from $x$ to $y$ (and vice versa).

For a discrete random variable, standard information entropy remains invariant under invertible transformations. However, for a continuous random variable, entropy is not invariant under a change of variables, so in general $H(Q_y) \neq H(Q_x)$.

In contrast, information entropy with prior measure is fully invariant for both discrete as well as continuous random variables with
\begin{equation}
G(Q_y, P_y) = G(Q_x, P_x) \,.
\end{equation}
Furthermore, it also satisfies the data-processing inequality
\begin{equation}
G(Q_y,P_y) \geq G(Q_x,P_x)\,,
\end{equation}
so that entropy is nondecreasing under a data-processing map from $x$ to $y$,
with identity for lossless transformations such as an invertible change of variables.

\subsection{Interpretations of entropy}

In thermodynamics, entropy determines how much of the internal energy 
of a system is available to do work. Low
entropy implies that energy is concentrated and available for work, and
high entropy means energy is spread out and not available for work \citep{Leff1996}.

Similarly, information entropy $G(Q, P)$ measures the spread or dispersal of probability mass in the distribution $Q$ with respect to a reference distribution $P$.  As a special case, standard information entropy $H(Q)$  measures the spread of probability mass relative to the uniform distribution. 
When probability mass is concentrated, entropy is low. Conversely, when the probability mass is spread out, entropy is high.  Correspondingly, entropy is often closely linked with other measures of dispersion of a distribution. For instance, for the normal
distribution $P=N(\mu, \sigma^2)$ the entropy $H(P) = (\log(2 \pi \sigma^2)+1)/2$
is a function of the variance $\sigma^2$.

The combinatorial perspective on entropy directly supports its interpretation as a measure of dispersal as the multiplicity $W_K$ is far greater when items are spread out over all $K$ categories than when they are concentrated in a small number of classes. 

The link of entropy with the log-loss also allows interpretation in terms
of reward and minimum expected loss. Specifically, the entropy $G(Q, P)$ can be viewed as a reward when using $P$ to describe data from $Q$ that is maximised at zero when the correct model $P=Q$ is used, whereas the standard entropy $H(Q)$ is the minimum attainable mean log-loss $H(Q,P)$ for the correct model $P=Q$.

Finally, a popular but outdated notion of entropy is that of quantifying disorder, with
low entropy corresponding to order and high entropy to disorder. However, this interpretation is misleading as many apparently ordered systems can have high entropy, and some seemingly disordered systems can have low entropy \citep{Leff2007}.

\section{Exponential families via maximum entropy}
\label{sec-efmaxent}

Using only elementary calculations, both the general exponential family (EF) and its subset with uniform base (EFU) are shown in this section to naturally arise as maximum-entropy distributions.

\subsection{Principle of maximum entropy}

In statistical mechanics models, a physical system continuously explores its state space as energy is exchanged and redistributed. Equilibrium is the state the system occupies most of the time and corresponds to the typical macrostate with an overwhelmingly large number of compatible microstates, i.e. the macrostate with the largest volume. Therefore, in physics, equilibrium states maximise entropy subject to the relevant constraints. 

In probability models,  maximum entropy is most naturally
understood via the dispersal interpretation of entropy.  A probability distribution whose mass is tightly concentrated has low entropy and carries much information. By contrast, a distribution with spread-out probability mass has high entropy and is relatively uninformative. Hence the maximum-entropy principle selects the least-informative distribution that satisfies the given constraints.

\subsection{Emergence of exponential families}

Exponential families naturally appear in statistical mechanics as the canonical form of equilibrium distributions under constraints, such as a given mean energy level, and the strict concavity of entropy ensures that any such distribution is unique and stable.

Correspondingly, in probability and statistics, exponential families arise by selecting a prior measure $B$ and canonical statistics $\bt(x)$, then seeking the maximum-entropy and thus least-informative distribution among all distributions with specified  expectation $\expect(\bt(x)) = \bmu_{\bt}$.  Again, uniqueness is guaranteed by concavity.

This is a classic result established in \citet{Jaynes1968} and subsequently by \citet{ShoreJohnson1981} using constrained optimisation, see also \citet{Murphy2023}, yet it is not found in many contemporary texts on exponential families or information theory.

Maximising standard information entropy, rather than entropy with general prior measure, yields the subset of exponential families with a constant base function.  
This is a conjecture found in \citet{Jaynes1957a} and considered more commonly
in textbooks,  see \tabcite{tab-textbooks}.

\subsection{Exponential families with general base maximise information entropy
with prior measure}

The following simple proposition shows that exponential families with general base function are indeed characterised by achieving maximum entropy among all distributions with the same
support and specified expectation constraint.   

Assume $B$ as base distribution and that $P= P(\bfeta)$ exists as general exponential family constructed by exponential tilting the base $B$ using the canonical statistics $\bt(x)$.
The expectation parameters corresponding to $\bfeta$ are $\expect_P(\bt(x)) = \bmu_{\bt}$. Let $Q$ be a distribution on the same domain as $B$
satisfying the constraint $\expect_Q(\bt(x)) = \bmu_{\bt}$.

\begin{prop}
The information entropy $G(Q, B)$ of $Q$ with prior measure $B$ is uniquely maximised with respect to $Q$ at $Q=P$ so that the entropy $G(P, B)$ of the exponential family $P$ constructed from base $B$ and canonical statistics $\bt(x)$ with expectation $\bmu_{\bt}$ is larger than the entropy  $G(Q, B)$ unless $Q=P$.
\end{prop}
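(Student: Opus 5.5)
The plan is to avoid any constrained optimisation and instead write $G(Q,B)$ as the sum of a term that depends on $Q$ only through the constrained expectation $\expect_Q(\bt(x))$, and a term that is a nonpositive entropy with prior measure $P$. The pdmf of the exponential family relative to the normalised base is
\[
\frac{p(x)}{b(x)} = \frac{z(0)}{z(\bfeta)}\, e^{\langle \bfeta, \bt(x)\rangle},
\]
because $p(x) = h(x) e^{\langle \bfeta,\bt(x)\rangle}/z(\bfeta)$ and $b(x) = h(x)/z(0)$. I will insert the factor $p(x)/p(x)$ into the log-ratio defining $G(Q,B)$, which gives
\[
G(Q,B) = -\expect_Q\log\left(\frac{q(x)}{p(x)}\right) - \expect_Q\log\left(\frac{p(x)}{b(x)}\right)
= G(Q,P) - \langle \bfeta, \expect_Q(\bt(x))\rangle + a(\bfeta) - a(0).
\]

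The second step uses the constraint. For every admissible $Q$ we have $\expect_Q(\bt(x)) = \bmu_{\bt}$, and the same holds for $Q = P$. The last three terms above are therefore the same constant $c = -\langle\bfeta,\bmu_{\bt}\rangle + a(\bfeta) - a(0)$ for all $Q$ in the constraint set. Taking $Q = P$ and using $G(P,P)=0$ shows that $c = G(P,B)$. This gives the identity
\[
G(Q,B) = G(P,B) + G(Q,P),
\]
which is a Pythagorean-type decomposition.

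The third step uses the Gibbs (properness) inequality stated earlier, $H(Q,P)\ge H(Q)$ with equality only when $Q=P$. This gives $G(Q,P)\le 0$, with equality exactly at $Q=P$. Hence $G(Q,B)\le G(P,B)$, with equality if and only if $Q=P$, which proves both existence and uniqueness of the maximiser. The strict concavity discussed earlier is then not needed, although it agrees with the result.

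I expect the main obstacle to be the technical conditions rather than the algebra. We need $Q$ to be absolutely continuous with respect to $B$, which is what "the same domain" is taken to mean, and hence with respect to $P$, since $p$ and $b$ share the support of $h$. We also need the expectations to be finite so that splitting the logarithm is legitimate: if $G(Q,B)=-\infty$ there is nothing to prove, and $\expect_Q(\bt(x))$ is finite by the constraint. For discrete variables the same argument goes through with sums in place of integrals.
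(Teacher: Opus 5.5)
Your proof is correct and is essentially the paper's argument. The paper's identity $H(Q,P) = H(P) - H(P,B) + H(Q,B)$, obtained from the constraint in the same way, is just your decomposition $G(Q,B) = G(P,B) + G(Q,P)$ rearranged, and both proofs conclude with the properness inequality (your log-ratio form additionally keeps the constant $a(0)$ explicit and does not require $H(Q)$ and $H(Q,B)$ to be finite separately).
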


\begin{proof}

The log-pdmf of $P$ can be written as $\log p(x| \bfeta) = \langle \bfeta, \bt(x) \rangle + \log b(x) - a(\bfeta)$ where $b(x)$ is the normalised pdmf of the base distribution $B$.
Correspondingly, the entropy of $P$ is  $H(P) = -\langle \bfeta, \bmu_{\bt} \rangle + H(P, B) + a(\bfeta)$ and the mean log-loss between $Q$ and $P$ is $H(Q, P) = -\langle \bfeta, \bmu_{\bt} \rangle + H(Q, B) + a(\bfeta)$.  Therefore, $H(Q, P) = H(P)-H(P,B)+H(Q,B)$.

The properness inequality states that $H(Q, P) \geq H(Q)$, with identity uniquely for $Q=P$.
Substituting $H(Q, P)$ with the above, it follows directly that $H(P)-H(P,B) \geq H(Q)-H(Q,B)$ and hence $G(P, B) \geq G(Q, B)$ with identity uniquely for $Q=P$.
\end{proof}

Therefore, the exponential family $P$ with base $B$ achieves maximum entropy with reference measure $B$ and any distribution $Q$ with the same mean constraint will have lower entropy than $P$, unless $Q=P$.
The maximum is unique because of the properness inequality and the strict concavity of information entropy.

Note that $G(Q, P)$ 
and the exponential family form are both invariant under a change of variables.

\subsection{Exponential families with uniform base maximise standard information entropy}

A similar argument as above can be applied to the special case of exponential families with a unform base.

Assume that $P= P(\bfeta)$ exists as exponential family constructed by exponential tilting from the uniform base using the canonical statistics $\bt(x)$.
The expectation parameters corresponding to $\bfeta$ are $\expect_P(\bt(x)) = \bmu_{\bt}$. Let $Q$ be a distribution on the same domain as $B$
satisfying the constraint $\expect_Q(\bt(x)) = \bmu_{\bt}$.

\begin{prop}
The information entropy $H(Q)$ of $Q$ is uniquely maximised with respect to $Q$ at $Q=P$ so that the entropy $H(P)$ of the exponential family $P$ constructed from a uniform base and canonical statistics $\bt(x)$ with expectation $\bmu_{\bt}$ is larger than the entropy  $H(Q)$ unless $Q=P$.
\end{prop}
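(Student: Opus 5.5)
The plan is to mirror the argument for the previous Proposition, now with a constant base function $h(x)=1$ on the common support, and again reduce everything to the properness (Gibbs) inequality $H(Q,P)\geq H(Q)$, with equality only for $Q=P$. This avoids any Lagrange multipliers. One could instead invoke the previous Proposition with $B$ uniform and use $G(Q,B)=H(Q)-H(Q,B)$ with $H(Q,B)$ constant in $Q$. However, that requires a normalisable uniform base, which fails for supports like $\mathbb{R}$ or $\mathbb{R}^+$ (e.g.\ the normal and gamma cases in \tabcite{tab-expfam}). So I will argue directly from the unnormalised form.

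First, write the log-pdmf of $P$ as $\log p(x|\bfeta)=\langle \bfeta,\bt(x)\rangle - a(\bfeta)$ on the support, since $\log h(x)=0$ there. Then compute the mean log-loss of $P$ under $Q$:
\begin{equation*}
H(Q,P) = -\langle \bfeta, \expect_Q(\bt(x))\rangle + a(\bfeta) = -\langle \bfeta, \bmu_{\bt}\rangle + a(\bfeta),
\end{equation*}
using linearity of expectation and the constraint $\expect_Q(\bt(x))=\bmu_{\bt}$. The same computation with $Q$ replaced by $P$ gives $H(P)=H(P,P)=-\langle\bfeta,\bmu_{\bt}\rangle + a(\bfeta)$. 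Hence $H(Q,P)=H(P)$ for every admissible $Q$; the risk of $P$ is the same for all distributions satisfying the constraint.

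Next, apply the properness inequality. It gives $H(P)=H(Q,P)\geq H(Q)$, with equality if and only if $Q=P$. This is exactly the claim: $H(P)>H(Q)$ unless $Q=P$. Uniqueness of the maximiser comes directly from the strictness of the properness inequality. The strict concavity of $H$ discussed earlier can be cited as a consistency check but is not needed.

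The main obstacle is not algebraic but concerns the conditions under which each step is valid. We need $Q$ to have the same support as the base, so that $\log p(x|\bfeta)$ is finite $Q$-almost everywhere and $H(Q,P)$ is well defined. We also need $\bt(x)$ to be $Q$-integrable, so that the expectation can be pulled inside the inner product. Finally, $H(Q)$ must be finite, or else the inequality holds trivially when $H(Q)=-\infty$. I would state these as standing assumptions ("$Q$ on the same domain with finite entropy") rather than prove them. In the continuous case I would also note that the properness inequality for differential entropy is just nonnegativity of $\ikl(Q,P)$, which holds regardless of normalisability of the uniform base.
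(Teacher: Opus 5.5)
Your proposal is correct and follows the paper's proof essentially verbatim. With $h(x)=1$ you show that $H(Q,P)=-\langle\bfeta,\bmu_{\bt}\rangle+a(\bfeta)=H(P)$ for every $Q$ satisfying the constraint, and the properness inequality then gives $H(P)\geq H(Q)$ with equality only at $Q=P$. Your extra remarks are sensible refinements that the paper leaves implicit: avoiding Proposition~1 because a uniform base on $\mathbb{R}$ or $\mathbb{R}^+$ is not normalisable, and the integrability and support conditions.
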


\begin{proof}

Assuming $P$ has uniform base with $h(x)=1$ the log-pdmf of $P$ can be written as $\log p(x| \bfeta) = \langle \bfeta, \bt(x) \rangle ) - a(\bfeta)$.
Correspondingly, the entropy of $P$ is  $H(P) = -\langle \bfeta, \bmu_{\bt} \rangle  + a(\bfeta)$ and the mean log-loss between $Q$ and $P$ is $H(Q, P) = -\langle \bfeta, \bmu_{\bt} \rangle + a(\bfeta)$.  Therefore, $H(Q, P) = H(P)$.

The properness inequality states that $H(Q, P) \geq H(Q)$, with identity uniquely for $Q=P$.
Substituting $H(Q, P)$ with the above, it follows directly that $H(P) \geq H(Q)$ with identity uniquely for $Q=P$.

\end{proof}

Therefore, an exponential family $P$ with uniform base achieves maximum entropy $H(P)$ and any distribution $Q$ with the same mean constraint will have lower information entropy than $P$, unless $Q=P$.
The maximum is unique because of the properness inequality and the strict concavity of information entropy.

Note that standard information entropy $H(Q)$ 
and the exponential family form with constant base are both not invariant under a change of variables.

\section{Conclusion}

Deriving the functional form of exponential families is often seen as technically difficult and left to advanced texts. The purpose of this note is to show that this is need not be the case and that an elementary derivation by maximum entropy is available that requires ony few prerequisites.
 
Similar maximum-entropy derivations have been previously discussed, in particular for case of standard entropy and exponential families with uniform base.  Proposition~2 is simplified version of Theorem 12.1.1 of \cite{CoverThomas2006}, whereas Proposition~1 is a generalisation to the case of general exponential families and to entropy with prior measure.

The principle of maximum entropy provides a clear justification for exponential families as 
the least-informative distributions given moment constraints. Thus, they offer a principled default when only expectations of canonical statistics are known and no further structure is assumed.

This underlines the case for including information entropy in the statistics curriculum as entropy arguments play a pivotal role in many other statistical applications \citep{KonishiKitagawa2008}.  Moreover, information entropy, and its generalisations,  underpin modern generalised learning procedures  \citep{KoblauchJewsonDamoulas2022,KhanRue2023} and are closely linked to proper scoring rules \citep{Dawid2007} and information geometry \citep{Amari2016}.

\bibliographystyle{apalike}
\bibliography{preamble,stats,physics,strimmer}

\newcommand{\noopsort}[1]{} \newcommand{\printfirst}[2]{#1}
  \newcommand{\singleletter}[1]{#1} \newcommand{\switchargs}[2]{#2#1}
\begin{thebibliography}{}

\bibitem[Akaike, 1985]{Akaike1985}
Akaike, H. (1985).
\newblock Prediction and entropy.
\newblock In Atkinson, A. and Fienberg, S., editors, {\em A Celebration of
  Statistics}, chapter~1, pages 1--24. Springer.

\bibitem[Amari, 2016]{Amari2016}
Amari, S. (2016).
\newblock {\em Information Geometry and Its Applications}.
\newblock Springer.

\bibitem[Barankin and Maitra, 1963]{BarankinMaitra1963}
Barankin, E.~W. and Maitra, A.~P. (1963).
\newblock Generalization of the {Fisher}-{Darmois}-{Koopman}-{Pitman} theorem
  on sufficient statistics.
\newblock {\em Sankhy\={a} A}, 25:217--244.

\bibitem[Cover and Thomas, 2006]{CoverThomas2006}
Cover, T.~M. and Thomas, J.~A. (2006).
\newblock {\em Elements of Information Theory}.
\newblock John Wiley \& Sons, 2nd edition.

\bibitem[Darmois, 1935]{Darmois1935}
Darmois, G. (1935).
\newblock Sur les lois de probabilit\'{e} \`{a} estimation exhaustive.
\newblock {\em C. R. Acad. Sci. Paris}, 200:1265--1266.

\bibitem[Dawid, 2007]{Dawid2007}
Dawid, A. (2007).
\newblock The geometry of proper scoring rules.
\newblock {\em Ann. Inst. Statist. Math.}, 59:77--93.

\bibitem[Efron, 2022]{EfronExpFam2022}
Efron, B. (2022).
\newblock {\em Exponential Families in Theory and Practise}.
\newblock Cambridge University Press.

\bibitem[Esscher, 1932]{Esscher1932}
Esscher, F. (1932).
\newblock On the probability function in the collective theory of risk.
\newblock {\em Scand. Actuar. J.}, 15:175--195.

\bibitem[Fisher, 1934]{Fisher1934}
Fisher, R.~A. (1934).
\newblock Two new properties of mathematical likelihood.
\newblock {\em Proc. R. Soc. Lond. A}, 144:285--307.

\bibitem[Golan, 2018]{Golan2018}
Golan, A. (2018).
\newblock {\em Foundations of Info-Metrics: Modeling, Inference, and Imperfect
  Information}.
\newblock Oxford University Press.

\bibitem[Jaynes, 1957]{Jaynes1957a}
Jaynes, E.~T. (1957).
\newblock Information theory and statistical mechanics.
\newblock {\em Phys. Rev.}, 106:620--630.

\bibitem[Jaynes, 1968]{Jaynes1968}
Jaynes, E.~T. (1968).
\newblock Prior probabilities.
\newblock {\em IEEE Trans. Syst. Sci. Cybern.}, 4:227--241.

\bibitem[Khan and Rue, 2023]{KhanRue2023}
Khan, M.~E. and Rue, H. (2023).
\newblock The bayesian learning rule.
\newblock {\em J. Mach. Learn. Res.}, 24(281):1--46.

\bibitem[Knoblauch et~al., 2022]{KoblauchJewsonDamoulas2022}
Knoblauch, J., Jewson, J., and Damoulas, T. (2022).
\newblock An optimization-centric view on bayes' rule: Reviewing and
  generalizing variational inference.
\newblock {\em J. Mach. Learn. Res.}, 23(132):1--109.

\bibitem[Konishi and Kitagawa, 2008]{KonishiKitagawa2008}
Konishi, S. and Kitagawa, G. (2008).
\newblock {\em Information Criteria and Statistical Modeling}.
\newblock Springer.

\bibitem[Koopman, 1936]{Koopman1936}
Koopman, B.~O. (1936).
\newblock On distributions admitting a sufficient statistic.
\newblock {\em Trans. Amer. Math. Soc.}, 39:399--409.

\bibitem[Kullback and Leibler, 1951]{KullbackLeibler1951}
Kullback, S. and Leibler, R.~A. (1951).
\newblock On information and sufficiency.
\newblock {\em Ann. Math. Statist.}, 22:79--86.

\bibitem[Leff, 1996]{Leff1996}
Leff, H.~S. (1996).
\newblock Thermodynamic entropy: The spreading and sharing of energy.
\newblock {\em Am. J. Phys.}, 64:1261--1271.

\bibitem[Leff, 2007]{Leff2007}
Leff, H.~S. (2007).
\newblock Entropy, its language, and interpretation.
\newblock {\em Bell Syst. Tech. J.}, 77:1744--1766.

\bibitem[Levine and Tribus, 1979]{LevineTribus1979}
Levine, R.~D. and Tribus, M., editors (1979).
\newblock {\em The Maximum Entropy Formalism}.
\newblock MIT Press.

\bibitem[Little, 2019]{Little2019}
Little, M.~A. (2019).
\newblock {\em Machine Learning for Signal Processing: Data Science,
  Algorithms, and Computational Statistics}.
\newblock Oxford University Press.

\bibitem[Mandelbrot, 1962]{Mandelbrot1962}
Mandelbrot, B. (1962).
\newblock The role of sufficiency and of estimation in thermodynamics.
\newblock {\em Ann. Math. Statist}, 33:1021--1038.

\bibitem[Mandl, 1988]{Mandl1988}
Mandl, F. (1988).
\newblock {\em Statistical Physics}.
\newblock Wiley, 2nd edition.

\bibitem[McElreath, 2020]{McElreath2020}
McElreath, R. (2020).
\newblock {\em Statistical Rethinking}.
\newblock Chapman and Hall/CRC, 2nd edition.

\bibitem[Morris and Lock, 2009]{MorrisLock2009}
Morris, C.~N. and Lock, K.~F. (2009).
\newblock Unifying the named natural exponential families and their relatives.
\newblock {\em Am. Stat.}, 63:247--253.

\bibitem[Murphy, 2012]{Murphy2012}
Murphy, K.~P. (2012).
\newblock {\em Machine Learning: A Probabilistic Perspective}.
\newblock MIT Press.

\bibitem[Murphy, 2023]{Murphy2023}
Murphy, K.~P. (2023).
\newblock {\em Probabilistic Machine Learning: Advanced Topics}.
\newblock MIT Press.

\bibitem[Pachter et~al., 2024]{PachterYangDill2024}
Pachter, J.~A., Yang, Y.-J., and Dill, K.~A. (2024).
\newblock Entropy, irreversibility and inference at the foundations of
  statistical physics.
\newblock {\em Nat. Rev. Physics}, 6:382--393.

\bibitem[Pitman, 1936]{Pitman1936}
Pitman, E. J.~G. (1936).
\newblock Sufficient statistics and intrinsic accuracy.
\newblock {\em Math. Proc. Camb. Philos. Soc.}, 32:567--579.

\bibitem[Reif, 1965]{Reif1965}
Reif, F. (1965).
\newblock {\em Fundamentals of Statistical and Thermal Physics}.
\newblock McGraw-Hill.

\bibitem[Rosenkrantz, 1983]{Rosenkrantz1983}
Rosenkrantz, R.~D., editor (1983).
\newblock {\em E. T. Jaynes: Papers on Probability, Statistics and Statistical
  Physics}.
\newblock D. Reidel Publishing Company.

\bibitem[Shore and Johnson, 1981]{ShoreJohnson1981}
Shore, J. and Johnson, R. (1981).
\newblock Properties of cross-entropy minimization.
\newblock {\em IEEE Trans. Inform. Theory}, 27:472--482.

\bibitem[Sundberg, 2019]{Sundberg2019}
Sundberg, R. (2019).
\newblock {\em Statistical Modelling by Exponential Families}.
\newblock Cambridge University Press.

\bibitem[Wainwright and Jordan, 2008]{WainwrightJordan2008}
Wainwright, M.~J. and Jordan, M.~I. (2008).
\newblock Graphical models, exponential families, and variational inference.
\newblock {\em Found. Trends Mach. Learn.}, 1:1--305.

\bibitem[Wehrl, 1978]{Wehrl1978}
Wehrl, A. (1978).
\newblock General properties of entropy.
\newblock {\em Rev. Mod. Phys.}, 50:221--260.

\end{thebibliography}

\end{document}